\date{}
\newcommand{\be}{\beta}
\newcommand{\ba}{\begin{array}}
\newcommand{\ea}{\end{array}}
\newcommand{\beqq}{\begin{equation*}}
\newcommand{\eeqq}{\end{equation*}}
\newcommand{\beq}{\begin{equation}}
\newcommand{\eeq}{\end{equation}}
\newcommand{\bdm}{\begin{displaymath}}
\newcommand{\edm}{\end{displaymath}}
\theoremstyle{definition}
\newtheorem{theorem}{Theorem}[section]
\newtheorem{definition}{Definition}[section]
\newtheorem{lemma}{Lemma}[section]
\newtheorem{remark}{Remark}[section]
\newtheorem{example}{Example}[section]
\numberwithin{equation}{section}
\begin{document}
\pagestyle{plain}

\begin{center}
{\large \bf   Sinai-Ruelle-Bowen measures for piecewise hyperbolic maps with two directions of instability in three-dimensional spaces}
\\ [0.3in]
XU ZHANG \footnote{ Email address:\ \ xuzhang08@gmail.com (X. Zhang).}

 \vspace{0.15in}
{\it  Department of Mathematics, Michigan State University,
East Lansing, MI 48824, USA}
\end{center}

\vspace{0.18in}

\baselineskip=20pt

{\bf\large{ Abstract.}}\ A class of piecewise $C^2$ Lozi-like maps in three-dimensional Euclidean spaces is introduced, and the existence of Sinai-Ruelle-Bowen measures is studied, where the dimension of the instability is equal to two. Further, an example with computer simulations is provided to illustrate the theoretical results.

\section{Introduction}

The study of the existence of the invariant measure of a map is an interesting problem in dynamical systems. Sinai investigated the $C^2$ Anosov diffeomorphism on a compact connected Riemannian manifold, and showed that the measure has absolutely continuous conditional measure on unstable manifolds \cite{Sinai1972}.  Bowen and Ruelle obtained similar results for Axiom A attractors \cite{Bowen1975}. Based on Sinai, Ruelle, and Bowen's work, the invariant Borel measure, which has absolutely continuous conditional measure on unstable manifolds with respect to the Lebesgue measure, is called the Sinai-Ruelle-Bowen measure (SRB measure). For more information on SRB measure, please refer to Young's work \cite{Young2002}.

Later, Pesin developed the non-uniformly hyperbolic theory \cite{BarreiraPesin, Pesin1978}. For singular systems, Katok and Strelcyn investigated the existence of invariant manifolds and obtained some similar results \cite{KatokStre}.  And, there are lots of work on the billiard systems and so on \cite{ChernovMarkarian2006, Ledr84}.
In \cite{JakobsonNewhouse}, Jakobson and Newhouse obtained some sufficient conditions for the existence of SRB measure for piecewise $C^2$ diffeomorphisms with unbounded derivatives. In \cite{Sanchez}, S\'{a}nchez-Salas provided some sufficient conditions for the existence of SRB measure for transformations with infinitely many hyperbolic branches.

In the research of two-dimensional maps, there are two important types of maps, one is the H\'{e}non map, the other is the Lozi map. A series of work on the H\'{e}non map obtained by Benedicks, Carlson, Young, Viana, and Wang, described the relationship between the parameters and the dynamics deeply \cite{BenedicksCarleson, BenedicksViana, BenYou, WangYoung2001}. There are lots of results about the Lozi map \cite{ColletLevy, Young1985}.

The study of the SRB measure also inspires the study of the statistical properties of dynamical systems. The well-known Lasota-Yorke inequality and some generalization contributed to the development of the chaos theory greatly \cite{BoyarskyGora, LasotaYorke}. A powerful tool in the research of the statistical properties is the transfer operator approach \cite{Sarig, Young99, Young}.

In our present work, we apply the bounded variation function method \cite{Liverani2013} and Young's idea in two-dimensional maps \cite{Young1985} to study the existence of SRB measure for a class of three-dimensional maps, which can be thought of as the generalization of the Lozi map in three-dimensional spaces, where the dimension of the  instability is equal to two. Our results and discussions can be easily generalized to study the maps with several directions of instability in high-dimensional spaces.

The rest is organized as follows. In Section 2, the main result is introduced. In Section 3, the proof of the main results is provided. In Section 4, an example with computer simulations is given to illustrate the theoretical results.

\bigskip

\section{Main Results}

In this section, the main results are introduced and some basic concepts and lemmas are given.

Given any $0=a_0<a_1<\cdots<a_p<a_{p+1}=1$ and $0=b_0<b_1<\cdots<b_q<b_{q+1}=1$, denote
$$S_1:=\{a_1,...,a_p\}\times[0,1]\times[0,1],\ S_2:=[0,1]\times\{b_1,...,b_q\}\times[0,1],$$ and
$$S:=S_1\cup S_2,\ R:=[0,1]^3.$$

There is a natural partition of the set $[0,1]^2\setminus \big(\big(\{a_0,a_1,...,a_p,a_{p+1}\}\times[0,1]
\big)\cup\big([0,1]\times\{b_0,b_1,...,b_q,b_{q+1}\}\big)\big)$.
Without loss of generality, suppose this partition is
\beqq
\bigcup^{(p+1)(q+1)}_{k=1}\Omega_k=[0,1]^2\setminus \big(\big(\{a_0,a_1,...,a_p,a_{p+1}\}\times[0,1]
\big)\cup\big([0,1]\times\{b_0,b_1,...,b_q,b_{q+1}\}\big)\big),
\eeqq
where $\Omega_{k_1}\cap\Omega_{k_2}=\emptyset$, if $k_1\neq k_2$, and $\cup\overline{\Omega}_k=[0,1]^2$.
Assume that
\beqq
\Omega_k\cap\{(x,u)\}_{u\in[0,1]}
=\{x\}\times(a_{k}(x),b_{k}(x)),\ \forall x\in[0,1],\ 1\leq k\leq(p+1)(q+1);
\eeqq
and
\beqq \Omega_k\cap\{(u,y)\}_{u\in[0,1]}
=(c_{k}(y),d_{k}(y))\times\{y\},\ \forall y\in[0,1],,\ 1\leq k\leq(p+1)(q+1).
\eeqq
So, for any $ 1\leq k\leq(p+1)(q+1)$, one has
\beqq
\overline{\Omega}_k\cap\{(x,u)\}_{u\in[0,1]}
=\{x\}\times[a_k(x),b_k(x)]\ \mbox{and}\ \overline{\Omega}_k\cap\{(u,y)\}_{u\in[0,1]}
=[c_k(y),d_k(y)]\times\{y\}.
\eeqq

Consider a map $f$ on $R$ such that $f(R)\subset R$ and it satisfies the following assumptions:
\begin{align*}
\mbox{(A0).}\ & f|(R-S)\ \mbox{is a}\ C^2\ \mbox{map},\ f|(R-S)\ \mbox{and}\ f^{-1}|f(R-S)\\
&\mbox{have bounded first and second derivatives, respectively.}
\end{align*}
\begin{equation*}
\mbox{(A1).}\ \inf\bigg\{\bigg|\frac{\partial f_1}{\partial x}\bigg|-\bigg|\frac{\partial f_1}{\partial y}\bigg|-\bigg|\frac{\partial f_1}{\partial z}\bigg|,\ \bigg|\frac{\partial f_2}{\partial y}\bigg|-\bigg|\frac{\partial f_2}{\partial x}\bigg|-\bigg|\frac{\partial f_2}{\partial z}\bigg|\bigg\}=\lambda>1.
\end{equation*}
\begin{equation*}
\mbox{(A2).}\ \sup\bigg\{\bigg|\frac{\partial f_1}{\partial y}\bigg|,\ \bigg|\frac{\partial f_1}{\partial  z}\bigg|,\ \bigg|\frac{\partial f_2}{\partial x}\bigg|,\ \bigg|\frac{\partial f_2}{\partial z}\bigg|,\ \bigg|\frac{\partial f_3}{\partial x}\bigg|,\ \bigg|\frac{\partial f_3}{\partial y }\bigg|,\ \bigg|\frac{\partial f_3}{\partial z}\bigg|\bigg\}\leq\frac{\lambda}{8}.
\end{equation*}
\begin{align*}
\mbox{(A3).}\ &\mbox{There is}\ N\in\mathbb{N}\ \mbox{such that}\ \lambda^N>2\ \mbox{and}\ f_1(f^{k-1}(S_1))\cap \{a_1,...,a_p\}=\emptyset,\\
& f_2(f^{k-1}(S_2))\cap \{b_1,...,b_q\}=\emptyset,\ 1\leq k\leq N.
\end{align*}
The assumption (A1) means that the action of $Df$ when projected onto the $x$-axis and $y$-axis is uniformly expanding, respectively.
\begin{remark}
An example satisfying all the above assumptions is given in the last section.
\end{remark}

\begin{definition} \cite{Young1985}
A Borel probability measures $\mu$ on $R=[0,1]^3$ is said to have absolutely continuous conditional measures on unstable manifolds if there exist measurable partitions $\mathcal{P}_1\subset\mathcal{P}_2\subset\cdots$
of $R$ and measurable sets $V_1\subset V_2\subset V_3\subset\cdots$ such that
\begin{itemize}
\item [(i)] $\mu(V_n)\to1$ as $n\to\infty$;
\item [(ii)] each element of $\mathcal{P}_n|V_n$ is an open subset of some unstable manifold;
\item [(iii)] if $\{\mu_{c}:\ c\in\mathcal{P}_n|V_n\}$ denotes the system of some unstable manifold and $\mathcal{P}_n|V_n$ and $m_{c}$ denotes Riemannian measure on $c$, then for almost every $c\in\mathcal{P}_n|V_n$, one has $\mu_c\ll m_c$.
\end{itemize}
\end{definition}

Now, we introduce the bounded variation functions \cite{EvansGariepy, Liverani2013}. For any $\Omega\subset\mathbb{R}^2$, the support of any function $h=(h_1,h_2)\in L^1(\Omega,m)$ is contained in $\Omega$, where $m$ is the Lebesgue measure. Set
\begin{equation*}
\|h\|_{BV}:=\sup_{\psi\in \mathcal{C}^1_0(\mathbb{R}^2,\mathbb{R}^2),\ \|\psi\|_{\infty}\leq1}\int_{\mathbb{R}^2}
h\cdot \mbox{div}\psi\; dm,
\end{equation*}
where $\psi=(\psi_1,\psi_2)$, $\mbox{div}\psi=\partial_x\psi_1+\partial_y\psi_2$, and $\mathcal{C}^r_0$ represents the vector space of $r$-times differentiable functions with compact support. The bounded variation functions are a subset of $L^1$ with $\|\cdot\|_{BV}$ finite.

\begin{lemma}\cite{Liverani2013}\label{bvinequ}
\begin{itemize}
\item [(i)] There exists a constant $C_0>0$ such that
    \begin{equation*}
\|h\|_{L^1}\leq\|h\|_{L^{2}}\leq C_0\|h\|_{BV},\ \forall h\in BV(\Omega);
\end{equation*}
\item [(ii)]
for almost $x$ and $y$, $$h(x,\cdot),\ h(\cdot,y)\in BV(\mathbb{R})\subset L^{\infty}(\mathbb{R},m),\ \forall h\in BV(\Omega);$$
\item [(iii)] for each $h\in BV(\mathbb{R}^2)$ and all $\psi\in L^{\infty}(\mathbb{R}^2,m)$ of compact support and such that almost surely, $\psi(x,\cdot)$, $\psi(\cdot,y)\in\mathcal{C}^0(\mathbb{R},\mathbb{R})$, $\partial_{x}\psi(\cdot,y)$,  $\partial_y\psi(x,\cdot)\in L^1(\mathbb{R},m)$, one has
\begin{equation*}
\bigg|\int_{\mathbb{R}^2}h\cdot\mbox{div}\psi\; dm\bigg|\leq \|h\|_{BV}\|\psi\|_{L^{\infty}}.
\end{equation*}
\end{itemize}
\end{lemma}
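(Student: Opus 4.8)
Since the three items are standard facts about two-dimensional functions of bounded variation (this is why the statement is attributed to \cite{Liverani2013}), my plan is to reduce each to a classical analytic ingredient together with a mollification/approximation argument, working throughout with the good representative of $h$ and using that $\Omega\subset[0,1]^2$ has measure at most one. For part (i), the first inequality $\|h\|_{L^1}\le\|h\|_{L^2}$ is immediate from Cauchy--Schwarz on a domain of finite measure: $\int_\Omega|h|\,dm\le\|h\|_{L^2}\,m(\Omega)^{1/2}\le\|h\|_{L^2}$. The second inequality $\|h\|_{L^2}\le C_0\|h\|_{BV}$ is the critical Gagliardo--Nirenberg--Sobolev embedding $BV(\mathbb{R}^2)\hookrightarrow L^2(\mathbb{R}^2)$, and this is the one genuinely nontrivial point. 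For smooth compactly supported $u$ I would use the pointwise bounds $|u(x,y)|\le\int_{\mathbb{R}}|\partial_x u(t,y)|\,dt$ and $|u(x,y)|\le\int_{\mathbb{R}}|\partial_y u(x,s)|\,ds$, multiply them, integrate in $(x,y)$, and apply Fubini to obtain $\|u\|_{L^2}^2\le\|\partial_x u\|_{L^1}\|\partial_y u\|_{L^1}\le C_0^2\|\nabla u\|_{L^1}^2$. To pass from smooth functions to an arbitrary $h\in BV$ I would mollify, setting $h_\varepsilon=h*\rho_\varepsilon$, use $h_\varepsilon\to h$ in $L^1$ together with $\|\nabla h_\varepsilon\|_{L^1}\to\|h\|_{BV}$, and conclude by Fatou's lemma (the componentwise version handling the vector-valued $h=(h_1,h_2)$).

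For part (ii), the inclusion $BV(\mathbb{R})\subset L^\infty(\mathbb{R})$ is the one-dimensional fact that a function of finite variation, in its good representative, satisfies $\|g\|_\infty\le|g(x_0)|+\mathrm{Var}(g)$ at any Lebesgue point $x_0$, hence is bounded. The slicing statement is the sections characterization of $BV$: for $h\in BV(\mathbb{R}^2)$ one has $\int_{\mathbb{R}}\mathrm{Var}\bigl(h(x,\cdot)\bigr)\,dx=|D_y h|(\mathbb{R}^2)<\infty$ and, symmetrically, $\int_{\mathbb{R}}\mathrm{Var}\bigl(h(\cdot,y)\bigr)\,dy=|D_x h|(\mathbb{R}^2)<\infty$. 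Since each integrand is nonnegative with finite integral, the slices $h(x,\cdot)$ and $h(\cdot,y)$ have finite variation for almost every $x$ and $y$ respectively, so they lie in $BV(\mathbb{R})\subset L^\infty(\mathbb{R},m)$.

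For part (iii), the inequality is trivial by scaling when $\psi\in\mathcal{C}^1_0$, since $\|h\|_{BV}$ is by definition the supremum of $\int h\cdot\mathrm{div}\,\psi$ over $\|\psi\|_\infty\le1$, whence $\bigl|\int h\cdot\mathrm{div}\,\psi\bigr|\le\|h\|_{BV}\|\psi\|_\infty$ follows for any smooth $\psi$. The real content is relaxing the regularity of $\psi$ to the stated hypotheses (slices continuous, slice-derivatives in $L^1$). I would write $\int h\cdot\mathrm{div}\,\psi=\int h\,\partial_x\psi_1\,dm+\int h\,\partial_y\psi_2\,dm$, apply Fubini, and integrate by parts one variable at a time using the one-dimensional $BV$ structure of the slices established in part (ii): for almost every $y$, $\int_{\mathbb{R}}h(x,y)\,\partial_x\psi_1(x,y)\,dx=-\int_{\mathbb{R}}\psi_1\,dD_x h(\cdot,y)$ is dominated by $\|\psi_1(\cdot,y)\|_\infty\,\mathrm{Var}\bigl(h(\cdot,y)\bigr)$; integrating in $y$, adding the symmetric term, and invoking the sections identities from part (ii) yields the bound with the correct constant. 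The slice-continuity and $L^1$-slice-derivative hypotheses on $\psi$ are exactly what make these one-dimensional integrations by parts legitimate; equivalently one may mollify $\psi$ in each variable separately, noting $\|\psi_\varepsilon\|_\infty\le\|\psi\|_\infty$ and that the hypotheses force $\int h\cdot\mathrm{div}\,\psi_\varepsilon\to\int h\cdot\mathrm{div}\,\psi$.

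The main obstacle is the critical embedding in part (i): being borderline (the exponent $2=n/(n-1)$ for $n=2$), it does not follow from a soft subcritical Sobolev estimate but requires the sharp $L^1$-gradient argument above, together with careful control of $\|\nabla h_\varepsilon\|_{L^1}\to\|h\|_{BV}$ under mollification. Parts (ii) and (iii), by contrast, are essentially Fubini plus one-dimensional integration by parts once the sections theorem is in hand.
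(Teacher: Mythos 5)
The first thing to note is that the paper itself contains no proof of Lemma \ref{bvinequ}: the lemma is imported verbatim from \cite{Liverani2013} (with \cite{EvansGariepy} as the standard background) and used downstream as a black box, so there is no in-paper argument to compare yours against. Judged on its own, your reconstruction is essentially correct and follows the same standard route as the cited sources: for (i), Cauchy--Schwarz on a finite-measure domain plus the critical two-dimensional Gagliardo--Nirenberg argument (slice bounds, multiply, Fubini) for smooth functions, extended to $BV$ by mollification, lower semicontinuity of the variation under convolution, and Fatou; for (ii), the one-dimensional boundedness of $BV(\mathbb{R})$ functions together with the sections theorem $\int_{\mathbb{R}}\mathrm{Var}(h(\cdot,y))\,dy=|D_xh|(\mathbb{R}^2)<\infty$; for (iii), slicing plus one-dimensional integration by parts against slices that are continuous with $L^1$ derivative (hence absolutely continuous), with mollification of $\psi$ as an equivalent alternative --- which is in fact how Liverani argues. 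Three caveats are worth recording. First, the inequality $\|h\|_{L^1}\le\|h\|_{L^2}$ is false for a general $\Omega\subset\mathbb{R}^2$ as the lemma is literally stated; it requires $m(\Omega)\le1$, which holds in the paper's setting ($\Omega\subset[0,1]^2$) and which you correctly make explicit, so your proof silently repairs the statement. Second, in (iii) your ``apply Fubini'' step is not free: the hypotheses give $\int_{\mathbb{R}}|h(x,y)\,\partial_x\psi_1(x,y)|\,dx<\infty$ for a.e.\ $y$ but not, a priori, integrability of $h\,\partial_x\psi_1$ over $\mathbb{R}^2$, so you should either interpret the left-hand side as an iterated integral or run the mollification variant you mention, where $\psi_\varepsilon\in\mathcal{C}^1_0$ makes every manipulation legitimate and only the passage to the limit uses the slice hypotheses. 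Third, your claim of ``the correct constant'' in (iii) presupposes that $\|\psi\|_\infty$ in the definition of $\|\cdot\|_{BV}$ is the supremum of $\max(|\psi_1|,|\psi_2|)$, so that $\|h\|_{BV}=|D_xh|(\mathbb{R}^2)+|D_yh|(\mathbb{R}^2)$; with the Euclidean convention you would lose a factor of $\sqrt{2}$, harmless here but worth stating. None of these affects the substance: your proof is correct at the level of rigor customary for these standard facts, and it supplies exactly what the paper delegates to the literature.
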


The main result is stated as follows:

\begin{theorem}\label{mainresult}
For the map $f$ satisfying (A0)--(A3), there exists an invariant measure, which is an SRB measure.
\end{theorem}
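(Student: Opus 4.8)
The plan is to follow Young's construction of SRB measures for the planar Lozi map, but to carry out the crucial compactness step with the two-dimensional bounded-variation calculus of Lemma~\ref{bvinequ}, so as to accommodate the genuinely two-dimensional unstable leaves.

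First I would set up the hyperbolic structure. Using (A1) and (A2) I would introduce a stable cone around the $z$-direction and an unstable cone $C^u$ consisting of vectors $(v_1,v_2,v_3)$ whose $z$-component is dominated by a fixed multiple of $\max(|v_1|,|v_2|)$. The dominance of the diagonal entries $\partial f_1/\partial x$, $\partial f_2/\partial y$ over every off-diagonal entry (bounded by $\lambda/8$ via (A2), while (A1) forces each diagonal term to exceed $\lambda$ plus the off-diagonal ones) yields, by a direct cone computation, that $Df(C^u)\subset C^u$ on $R\setminus S$, that vectors in $C^u$ are expanded by a factor bounded below by some $\lambda'>1$ comparable to $\lambda$, and that the stable cone is contracted by $Df^{-1}$. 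Integrating the unstable cone field I obtain a class of admissible unstable leaves: two-dimensional $C^2$ surfaces that are graphs $z=\phi(x,y)$ with uniformly bounded slope (tangency to $C^u$) and uniformly bounded second derivatives, the curvature bound coming from the $C^2$ control in (A0). The central geometric fact, proved by iterating the cone estimate together with (A0), is that the image $f(W)$ of an admissible leaf $W$, after deleting its intersection with the singularity set $S$, is a finite union of admissible leaves; meeting $S_1$ and $S_2$ is exactly where $W$ is cut.

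Next I would pass to densities along leaves. Parametrizing each admissible leaf by its projection to the $x$-$y$ plane identifies a density on the leaf with a function on a domain $\Omega\subset\mathbb{R}^2$, placing it in the space $BV(\Omega)$ of Lemma~\ref{bvinequ}. I would then define the leafwise transfer operator $\mathcal{L}$ describing the evolution of conditional densities: it redistributes mass by the inverse area-Jacobian of $f$ restricted to the leaf and sums over the preimage leaves produced by the cutting. The core estimate is the effect of one iterate on the $BV$ norm. Since the area expansion along the leaf is of order $\lambda^2$ and the slope and curvature bounds make the change of variables uniformly regular, the smooth part of $\mathcal{L}$ contracts the variation by a factor comparable to $\lambda^{-1}$ in each leaf direction; the cutting along $S$ contributes boundary terms which, using the slicing and trace estimates of Lemma~\ref{bvinequ}(ii)--(iii) together with part (i), are dominated by $\|h\|_{L^1}$ times a constant. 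This gives a one-step inequality of Lasota--Yorke type with leading coefficient of order $(\text{number of cut pieces})\cdot\lambda^{-1}$ plus a lower-order $L^1$ term.

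Finally I would iterate $N$ times and close the argument. Assumption (A3) is precisely what controls the fragmentation: since the forward images $f^{k-1}(S_1),f^{k-1}(S_2)$ avoid the cutting planes for $1\le k\le N$, a single admissible leaf is cut by $S$ a uniformly bounded number of times over $N$ iterates, so the leading coefficient of the $N$-step inequality is at most of order $2\lambda^{-N}$, which is $<1$ exactly because $\lambda^N>2$. Thus
\begin{equation*}
\|\mathcal{L}^N h\|_{BV}\le\alpha\,\|h\|_{BV}+C\,\|h\|_{L^1},\qquad \alpha<1,
\end{equation*}
while $\mathcal{L}$ remains bounded on $L^1$. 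Since $BV(\Omega)$ embeds compactly in $L^1(\Omega)$ (by Lemma~\ref{bvinequ}(i) together with Rellich compactness on the bounded domain), the Ionescu-Tulcea--Marinescu--Hennion theorem yields quasi-compactness of $\mathcal{L}$, hence a fixed density $h^{\ast}\in BV$ with $\mathcal{L}h^{\ast}=h^{\ast}$. Reassembling $h^{\ast}$ over the leaves produces an $f$-invariant Borel probability measure whose conditional measures on unstable leaves have $BV$, in particular $L^1$, densities with respect to Riemannian measure; in the language of the Definition this is an SRB measure, proving Theorem~\ref{mainresult}. The main obstacle is the one-step $BV$ estimate: because the unstable direction is genuinely two-dimensional, the cutting occurs along two transverse families of curves on each leaf and one must bound the variation in both leaf directions simultaneously, which is exactly why the two-dimensional $BV$ calculus of Lemma~\ref{bvinequ}, rather than the one-dimensional total-variation bound used for the planar Lozi map, is required, and why handling the cut and boundary terms through its slicing properties is the delicate point.
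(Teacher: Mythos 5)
Your first half---cone invariance from (A1)--(A2), projecting unstable leaves to the $x$-$y$ plane, and a $BV$ Lasota--Yorke estimate whose contraction coefficient comes from (A3) and $\lambda^N>2$---matches the paper's Lemma~\ref{densitybv} (the paper's inequality \eqref{bound} is exactly your ``at most two overlapping pieces per direction, each weighted by $\lambda^{-1}$'' bound). The genuine gap is in how you pass from that estimate to an invariant measure. Your plan hinges on a fixed point of a ``leafwise transfer operator'' $\mathcal{L}$ on $BV(\Omega)$, obtained from quasi-compactness via Ionescu-Tulcea--Marinescu--Hennion. But there is no single Banach space on which such an operator acts: the projected dynamics $T=(p_x\times p_y)\circ f\circ(p_x\times p_y)^{-1}$ depends on the leaf through its graph $z=\phi(x,y)$ (because $f_1,f_2$ depend on $z$), and the collection of leaves changes at every step---$f^k(\mbox{graph}(\alpha))$ is a union of pieces $L_{i_1\cdots i_k}$, each carrying its own map $T$. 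What evolves is a family of densities indexed by an ever-growing set of pieces, i.e.\ a section of a bundle over the (eventual) unstable lamination, not an element of $BV([0,1]^2)$; summing the pieces to form $\hat\rho_k\in BV([0,1]^2)$ discards exactly the information needed to apply the evolution again, so $\hat\rho_k\mapsto\hat\rho_{k+1}$ is not a well-defined linear operator either. Hennion's theorem therefore has nothing to act on as you have set things up. The paper avoids this entirely: it never forms an operator, but proves the uniform bound $\|\hat\rho_k\|_{BV}\leq M$ along the actual orbit of densities, takes Ces\`{a}ro averages $n^{-1}\sum_{k\leq n}\hat\rho_k$, and extracts a weak-star limit in the manner of Krylov--Bogolyubov, whose $f$-invariance is automatic; your Lasota--Yorke estimate is precisely the right input for that argument.

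Separately, your closing sentence (``reassembling $h^{\ast}$ over the leaves produces an $f$-invariant Borel probability measure \dots\ this is an SRB measure'') compresses what is in fact the entire second half of the paper's proof. One must first show that $\mu$-a.e.\ point has a local unstable manifold at all---the paper does this by a Borel--Cantelli argument, using the constant $M$ to bound $\mu(D(S,\delta\lambda^{-k}))$ and then invoking Katok--Strelcyn---and then verify the definition of absolutely continuous conditional measures (Definition 2.1) by constructing the partitions $\mathcal{P}_n$, the truncated measures $\tilde{\mu}_k$ supported on pieces that fully cross partition elements, and a bounded-distortion estimate $\tilde{g}_{i_1\cdots i_k}(w_1)/\tilde{g}_{i_1\cdots i_k}(w_2)\leq M_1$ uniform in $i_1,\dots,i_k$. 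None of this is automatic from a fixed density of a projected operator. To repair your route you would need genuine operator machinery on the three-dimensional space (standard pairs or an anisotropic Banach space), which is far heavier than the theorem requires; the economical fix is to keep your estimate and replace the fixed-point step by the averaging argument, then supply the unstable-manifold and distortion arguments above.
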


\bigskip

\section{The existence of SRB measure}

In this section, it is to show Theorem \ref{mainresult}.

It follows from the definition of the map $f$ that the unstable manifold are piecewise smooth surface zigzag across $R$, which are turning around at unknown places. To avoid the singular set, the strategy is to construct an invariant measure $\mu$ with good dynamical behavior on a neighborhood of the singular set $S$.

First, given any $C^2$ surface $\alpha:[0,1]\times[0,1]\to[0,1]$, suppose that (A1) and (A2) hold, it is to show that if the angle between the normal vector of the surface and the $z$-axis (including both the positive and negative axes) is less than $45$ degrees, then the angle between the normal vector of $f(\mbox{graph}(\alpha))$ and the $z$-axis is also less than $45$ degrees, except points in the image of the singular set.

The graph of $\alpha$ is $(x,y,\alpha(x,y))$. The normal vector is the cross product of the vectors $<1,0,\alpha_x>$ and $<0,1,\alpha_y>$, that is, $<-\alpha_x,-\alpha_y,1>$. The cosine of the angle between the normal vector and the $z$-axis is $\frac{1}{\sqrt{1+\alpha^2_x+\alpha^2_y}}$ or $\frac{-1}{\sqrt{1+\alpha^2_x+\alpha^2_y}}$. The assumption that the angle between the normal vector and the $z$-axis is less than $45$ degrees is equivalent to
\beq\label{angle45}
\bigg|\frac{1}{\sqrt{1+\alpha^2_x+\alpha^2_y}}\bigg|\geq\frac{\sqrt{2}}{2}.
\eeq
Since $f(\mbox{graph}(\alpha))=(f_1(x,y,\alpha),f_2(x,y,\alpha),f_3(x,y,\alpha))$, one has that the tangent vectors are
\beqq
\bigg(\frac{\partial f_1}{\partial x}+\frac{\partial f_1}{\partial z}\frac{\partial \alpha}{\partial x}, \frac{\partial f_2}{\partial x}+\frac{\partial f_2}{\partial z}\frac{\partial \alpha}{\partial x},
\frac{\partial f_3}{\partial x}+\frac{\partial f_3}{\partial z}\frac{\partial \alpha}{\partial x} \bigg)
\eeqq
and
\beqq
\bigg(\frac{\partial f_1}{\partial y}+\frac{\partial f_1}{\partial y}\frac{\partial \alpha}{\partial y}, \frac{\partial f_2}{\partial y}+\frac{\partial f_2}{\partial z}\frac{\partial \alpha}{\partial y},
\frac{\partial f_3}{\partial y}+\frac{\partial f_3}{\partial z}\frac{\partial \alpha}{\partial y} \bigg).
\eeqq
So, the cross product is
\begin{align*}
&\bigg[\bigg(\frac{\partial f_2}{\partial x}\frac{\partial f_3}{\partial y}-\frac{\partial f_3}{\partial  x}\frac{\partial f_2}{\partial y }\bigg)+\bigg(\frac{\partial f_2}{\partial x}\frac{\partial f_3}{\partial z}-\frac{\partial f_3}{\partial x}\frac{\partial f_2}{\partial z }\bigg)\frac{\partial \alpha}{\partial y}+\bigg(\frac{\partial f_3}{\partial y}\frac{\partial f_2}{\partial z}-\frac{\partial f_2}{\partial y}\frac{\partial f_3}{\partial z}\bigg)\frac{\partial \alpha}{\partial x}\bigg]\vec{i} \\
-&\bigg[\bigg(\frac{\partial f_1}{\partial x}\frac{\partial f_3}{\partial y}-\frac{\partial f_3}{\partial  x}\frac{\partial f_1}{\partial y }\bigg)+\bigg(\frac{\partial f_1}{\partial x}\frac{\partial f_3}{\partial z}-\frac{\partial f_3}{\partial x}\frac{\partial f_1}{\partial z }\bigg)\frac{\partial \alpha}{\partial y}+\bigg(\frac{\partial f_3}{\partial y}\frac{\partial f_1}{\partial z}-\frac{\partial f_1}{\partial y}\frac{\partial f_3}{\partial z}\bigg)\frac{\partial \alpha}{\partial x}\bigg]\vec{j} \\
+&\bigg[\bigg(\frac{\partial f_1}{\partial x}\frac{\partial f_2}{\partial y}-\frac{\partial f_2}{\partial  x}\frac{\partial f_1}{\partial y }\bigg)+\bigg(\frac{\partial f_1}{\partial x}\frac{\partial f_2}{\partial z}-\frac{\partial f_2}{\partial x}\frac{\partial f_1}{\partial z }\bigg)\frac{\partial \alpha}{\partial y}+\bigg(\frac{\partial f_2}{\partial y}\frac{\partial f_1}{\partial z}-\frac{\partial f_1}{\partial y}\frac{\partial f_2}{\partial z}\bigg)\frac{\partial \alpha}{\partial x}\bigg]\vec{k}\\
:=&A\vec{i}+B\vec{j}+C\vec{k}.
\end{align*}

The absolute value of the cosine of the angle between the normal vector and the $z$-axis is $\frac{|C|}{\sqrt{A^2+B^2+C^2}}$. By (A2), (A3), and \eqref{angle45}, one has that $|C|\geq|A|+|B|$, which implies that $\frac{|C|}{\sqrt{A^2+B^2+C^2}}\geq\frac{\sqrt{2}}{2}.$ Hence, the angle between the normal vector and the $z$-axis is less than $45$ degrees.

Let $p_x:R\to[0,1]$ and $p_y:R\to[0,1]$ be the projection onto the $x$-axis and $y$-axis, respectively. The Lebesgue measure on $[0,1]\times[0,1]$ is denoted by $m$. If $\mu$ is a measure on $R$, then $f_*\mu$ is defined by $f_*\mu(E)=\mu(f^{-1}(E))$.
Let $J=[a,b]\times[c,d]\subset[0,1]\times[0,1]$ be a closed rectangle and $\alpha:J\to[0,1]$ be a $C^2$ function with the normal vector very close to the $z$-axis. Then the image of  $\mbox{graph}(\alpha)$ under the map $f$ is a union of finitely many smooth surfaces, which are denoted by $\{L_i(\alpha)\}$. Similarly, set the smooth surfaces of $f^k(\mbox{graph}(\alpha))$ as $\{L_{i_1i_2\cdots i_k}\}$ such that $f(L_{i_1i_2\cdots i_k})=\cup_jL_{i_1i_2\cdots i_kj}$. Let $\mu_0$ be the measure on $\mbox{graph}(\alpha)$ such that $(p_x\times p_y)_*\mu_0$ is the normalized Lebesgue measure on $J$, set $\mu_k:=(f^k)_*\mu_0$.  By (A1) and (A2), one has that $(p_x\times p_y)_*\mu_k|L_{i_1i_2\cdots i_k}$ is absolutely continuous with respect to Lebesgue measure $m$. The density of $(p_x\times p_y)_*\mu_k|L_{i_1i_2\cdots i_k}$ and $(p_x\times p_y)_*\mu_k$ are denoted by $\rho_{i_1i_2\cdots i_k}$ and $\hat{\rho}_k$, respectively. So, one has that $\sum_{i_1i_2\cdots i_k}\rho_{i_1i_2\cdots i_k}=\hat{\rho}_k$.

If $N>1$, where $N$ is specified in (A3), we could define the sets $L_{i_1i_2\cdots i_k}$ and $L_{i_1i_2\cdots i_kj}$ as above for the map $f^N$. Without loss of generality, assume that $N=1$ in the following discussions, that is, $\lambda>2$.

In the following discussions, fix $L_{i_1i_2\cdots i_k}$ and $L_{i_1i_2\cdots i_kj}$. Set $(p_x\times p_y)L_{i_1\cdots i_k}=\cup^s_{j=1}B_j$, where each $B_j$ is contained in some $\overline{\Omega}_{j'}$, such that $\Omega_{j'_1}\cap\Omega_{j'_2}
=\emptyset$, where $B_{j'_1}\subset\overline{\Omega}_{j'_1}$, $B_{j'_2}\subset\overline{\Omega}_{j'_2}$, and $j'_1\neq j'_2$. So, for the given $j$, there is a unique $B_{j+}$ such that $f(B_{j+})=L_{i_1\cdots i_kj}$. Set
$D_{j-}=(p_x\times p_y)L_{i_1\cdots i_kj}$. Define the following map $T:B_{j+}\to D_{j-}$:
\beqq
T(x,y):=(p_x\times p_y)\circ f \circ (p_x\times p_y)^{-1}(x,y).
\eeqq
By (A0) and (A1), one has that $T$ is $C^2$ between $B_{j+}$ and $D_{j-}$,
$$\bigg|\frac{\partial T_1}{\partial x}\bigg|-\bigg|\frac{\partial T_{1}}{\partial y}\bigg|\geq \lambda\ \mbox{and}\ \bigg|\frac{\partial T_{2}}{\partial y}\bigg|-\bigg|\frac{\partial T_{2}}{\partial x}\bigg|\geq \lambda,$$
where $T=(T_1,T_2)$. Set $T^{-1}:=(T^{-1}_1,T^{-1}_2)$.

Next, it is to study the density of the invariant measure on the unstable manifolds. And, it is to show the following lemma.

\begin{lemma} \label{densitybv}
For the given surface $\alpha$ as above, there exist an invariant Borel probability measure $\mu$ on $\alpha$ and a function $\rho:[0,1]\times[0,1]\to\mathbb{R}$ of bounded variation such that $d((p_x\times p_y)_*\mu)=\rho dm$.
\end{lemma}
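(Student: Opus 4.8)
The plan is to realize the projected densities $\hat\rho_k$ as iterates of a transfer (Perron--Frobenius) operator, to prove a Lasota--Yorke inequality for this operator in the $BV$ norm, and then to extract an invariant density by a compactness argument. First I would define, branch by branch via the maps $T:B_{j+}\to D_{j-}$ introduced above, the transfer operator $\mathcal{L}$ acting on densities on $[0,1]\times[0,1]$ by
$$(\mathcal{L}\rho)(x,y)=\sum_{\text{branches}}\frac{\rho\big(T^{-1}(x,y)\big)}{|\det DT\big(T^{-1}(x,y)\big)|}\,\mathbbm{1}_{D_{j-}}(x,y).$$
Since $(p_x\times p_y)_*\mu_k|L_{i_1\cdots i_k}$ has density $\rho_{i_1\cdots i_k}$ and consecutive pieces are related by $f$ precisely through $T$, one checks that $\hat\rho_{k+1}=\mathcal{L}\hat\rho_k$, so $\hat\rho_k=\mathcal{L}^k\hat\rho_0$ with $\hat\rho_0$ the normalized indicator of $J$. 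The lemma then reduces to producing a fixed point of $\mathcal{L}$ in $BV$, since any such fixed point $\rho$ is the density of the projection of an $f$-invariant measure supported on the closure of $\bigcup_k f^k(\mathrm{graph}(\alpha))$.

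Second, and this is the crux, I would prove a Lasota--Yorke inequality
$$\|\mathcal{L}\rho\|_{BV}\le\theta\,\|\rho\|_{BV}+C\,\|\rho\|_{L^1}$$
for some $\theta<1$ and $C>0$ independent of $\rho$. Using the dual description of $\|\cdot\|_{BV}$ from Lemma \ref{bvinequ}(iii), I would pair $\mathcal{L}\rho$ against a test field $\psi$, change variables through each branch $T$, and transfer the derivative onto $\psi\circ T$. The expansion hypothesis (A1), in the form $|\partial_x T_1|-|\partial_y T_1|\ge\lambda$ and $|\partial_y T_2|-|\partial_x T_2|\ge\lambda$, gives $|\det DT|\ge\lambda^2$ and bounds each pulled-back derivative by a factor $\asymp\lambda^{-1}$ in each of the two unstable directions; since we reduced to the case $\lambda>2$, the resulting multiplicative constant $\theta$ can be taken below $1$. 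The lower-order term splits into a distortion contribution, controlled by the bounds on the second derivatives of $f$ and $f^{-1}$ in (A0) (which keep $1/|\det DT|$ of bounded variation along each branch), and a boundary contribution coming from the cuts that the singular set $S$ produces in $f^k(\mathrm{graph}(\alpha))$. The near-horizontality established earlier (all normals within $45^\circ$ of the $z$-axis away from the image of $S$) is used to ensure each $T$ is a genuine bi-Lipschitz $C^2$ diffeomorphism onto its image, legitimizing the change of variables and the determinant estimate.

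The boundary contribution is where I expect the real difficulty, and it is exactly what assumption (A3) is designed to handle: the condition $\lambda^N>2$ together with $f_1(f^{k-1}(S_1))\cap\{a_1,\dots,a_p\}=\emptyset$ (and its $S_2$ analogue) guarantees that no new cut lines are created for $N$ steps and that the number of smooth pieces $L_{i_1\cdots i_k}$ grows slowly enough relative to the expansion. Consequently the total length of the new boundaries introduced at each step, weighted by $\lambda^{-1}$, stays summable and feeds only into the $C\|\rho\|_{L^1}$ term rather than inflating $\theta$; matching this geometric growth against the analytic expansion is the delicate bookkeeping that the whole argument hinges on.

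Finally, I would iterate the inequality to obtain $\sup_k\|\mathcal{L}^k\hat\rho_0\|_{BV}\le M<\infty$, hence a uniform $BV$ bound on the Cesàro averages $\rho_n:=\frac1n\sum_{k=0}^{n-1}\mathcal{L}^k\hat\rho_0$. By Lemma \ref{bvinequ}(i) these are bounded in $L^2$, and since $BV$ embeds compactly into $L^1$ on the bounded domain, a subsequence $\rho_{n_\ell}$ converges in $L^1$ to some $\rho$. Lower semicontinuity of $\|\cdot\|_{BV}$ yields $\rho\in BV$, and the telescoping estimate $\|\mathcal{L}\rho_n-\rho_n\|_{L^1}\le\frac1n\|\mathcal{L}^n\hat\rho_0-\hat\rho_0\|_{L^1}\to0$ shows $\mathcal{L}\rho=\rho$. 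Normalizing $\rho$ to a probability density and lifting it back to the surface through $(p_x\times p_y)^{-1}$ produces the invariant measure $\mu$ with $d((p_x\times p_y)_*\mu)=\rho\,dm$, completing the proof.
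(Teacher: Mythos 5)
Your overall skeleton---a Lasota--Yorke inequality in $BV$ proved by duality, then Ces\`aro averaging and compactness---is indeed the paper's strategy, but your proposal has a genuine gap at its foundation: there is no single, $k$-independent transfer operator $\mathcal{L}$ satisfying $\hat\rho_{k+1}=\mathcal{L}\hat\rho_k$. The branch maps $T=(p_x\times p_y)\circ f\circ(p_x\times p_y)^{-1}$ are defined only relative to a particular smooth piece $L_{i_1\cdots i_k}$: writing that piece as the graph of $z=g(x,y)$, one has $T(x,y)=\big(f_1(x,y,g(x,y)),f_2(x,y,g(x,y))\big)$, and since (A2) permits $\partial f_1/\partial z$ and $\partial f_2/\partial z$ to be nonzero, $T$ genuinely depends on $g$. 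Different pieces at time $k$ have different graph functions, and the family of pieces itself changes with $k$, so the projected dynamics is non-autonomous: what is true is $\hat\rho_{k+1}=\mathcal{L}_k\hat\rho_k$ with $\mathcal{L}_k$ assembled from piece-dependent branches. (Your framing would be legitimate if $f_1,f_2$ were independent of $z$, but the hypotheses do not give that.) Consequently the reduction of the lemma to a fixed point $\mathcal{L}\rho=\rho$, the telescoping identity $\mathcal{L}\rho_n-\rho_n=\frac1n(\mathcal{L}^n\hat\rho_0-\hat\rho_0)$, and the final step of ``lifting $\rho$ back through $(p_x\times p_y)^{-1}$'' all rest on an object that does not exist; moreover the limit measure is not supported on a single graph, so there is no well-defined lift. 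The paper sidesteps all of this: it proves the step inequality $\sum_j\|\rho_{i_1\cdots i_kj}\|_{BV}\le\tau^{-1}\|\rho_{i_1\cdots i_k}\|_{BV}+C_1\|\rho_{i_1\cdots i_k}\|_1$ one piece at a time (only one branch map is ever used at a time), sums over pieces to get $\beta_{k+1}\le C_1+\tau^{-1}\beta_k$, and then extracts $\mu$ as a weak-star limit of Ces\`aro averages of the three-dimensional measures $\mu_k=f^k_*\mu_0$, so that invariance comes from Krylov--Bogolyubov averaging rather than from any fixed-point property of projected densities.

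A second, smaller gap: you flag the boundary terms as ``the real difficulty'' but leave their treatment as a black box, whereas this is precisely the technical core of the paper's proof. After the change of variables, the transferred test field (with components such as $[(DT)^{-1}]_{11}\phi_1\circ T+[(DT)^{-1}]_{21}\phi_2\circ T$) is discontinuous across the cut lines, hence inadmissible in the dual definition of $\|\cdot\|_{BV}$; a naive integration by parts produces trace terms along the cuts that are controlled only by $\|\rho_{i_1\cdots i_k}\|_{BV}$, not by $\|\rho_{i_1\cdots i_k}\|_{L^1}$, which would destroy the Lasota--Yorke structure. The paper's fix is the explicit piecewise-linear ``tent'' corrections $\eta_{1,k,y},\eta_{2,k,x}$ supported in $\delta$-collars of each piece, which make the modified field $\Theta$ continuous while keeping $\|\Theta\|_\infty\le\tau^{-1}$ via the key bound \eqref{bound} (this is where (A3), the reduction to $\lambda>2$, and the bounded local multiplicity of pieces enter); the price is only the extra term $2\|\rho_{i_1\cdots i_k}\|_1/(\tau\delta)$, which is of lower order. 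Without some construction of this kind, your inequality $\|\mathcal{L}\rho\|_{BV}\le\theta\|\rho\|_{BV}+C\|\rho\|_{L^1}$ is asserted rather than proved.
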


\begin{proof}
For the given surface $\alpha$, it is to show that there is a positive constant $M$ such that $\|\hat{\rho}_k\|_{BV([0,1]^2)}\leq M$.

Suppose $\phi=(\phi_1,\phi_2)$ with $\phi\in \mathcal{C}^1_0(\mathbb{R}^2,\mathbb{R}^2)$ and
$\|\phi\|_{\infty}\leq1$.
It follows from direct calculation that
\begin{align*}
&\sum_j\|\rho_{i_1\cdots i_kj}\|_{BV}=\sum_j\int^1_0\int^1_0\rho_{i_1\cdots i_kj}(x,y)\bigg(\frac{\partial\phi_1(x,y)}{\partial x}+\frac{\partial\phi_2(x,y)}{\partial y}\bigg)dxdy\\
=&\int^1_0\int^1_0\frac{\rho_{i_1\cdots i_k}(T^{-1}_1(x,y),T^{-1}_2(x,y))}{\mbox{det}(DT(T^{-1}_1(x,y),T^{-1}_2(x,y)))}
\bigg(\frac{\partial\phi_1(x,y)}{\partial x}+\frac{\partial\phi_2(x,y)}{\partial y}\bigg)dxdy\\
=&\int^1_0\int^1_0 \rho_{i_1\cdots i_k}(x,y) \bigg(\frac{\partial\phi_1}{\partial x}\circ T+\frac{\partial\phi_2}{\partial y}\circ T\bigg)dxdy.
\end{align*}

By direct computation, one has
\beqq
\left(\begin{array}{c}
  \frac{\partial (\phi_1\circ T)}{\partial x} \\
  \frac{\partial (\phi_1\circ T)}{\partial y}
\end{array}
 \right)
=\left(
   \begin{array}{cc}
     \frac{\partial T_1}{\partial x} & \frac{\partial T_2}{\partial x} \\
     \frac{\partial T_1}{\partial y} & \frac{\partial T_2}{\partial y} \\
   \end{array}
 \right)
\left(
\begin{array}{c}
  \frac{\partial\phi_1}{\partial x}\circ T \\
  \frac{\partial\phi_1}{\partial y}\circ T
\end{array}
 \right),
\eeqq
\beqq
\left(\begin{array}{c}
  \frac{\partial (\phi_2\circ T)}{\partial x} \\
  \frac{\partial (\phi_2\circ T)}{\partial y}
\end{array}
 \right)
=\left(
   \begin{array}{cc}
     \frac{\partial T_1}{\partial x} & \frac{\partial T_2}{\partial x} \\
     \frac{\partial T_1}{\partial y} & \frac{\partial T_2}{\partial y} \\
   \end{array}
 \right)
\left(
\begin{array}{c}
  \frac{\partial\phi_2}{\partial x}\circ T \\
  \frac{\partial\phi_2}{\partial y}\circ T
\end{array}
 \right),
\eeqq
which implies that
\begin{align*}
&\frac{\partial\phi_1}{\partial x}\circ T+\frac{\partial\phi_2}{\partial y}\circ T\\
=&\bigg([(DT)^{-1}]_{11}\frac{\partial(\phi_1\circ T)}{\partial x}+[(DT)^{-1}]_{21}\frac{\partial(\phi_2\circ T)}{\partial x}\bigg)\\
&+\bigg([(DT)^{-1}]_{12}\frac{\partial(\phi_1\circ T)}{\partial y}+
[(DT)^{-1}]_{22}\frac{\partial(\phi_2\circ T)}{\partial y}\bigg)\\
=&\frac{\partial}{\partial x}\bigg([(DT)^{-1}]_{11}\phi_1\circ T+[(DT)^{-1}]_{21}\phi_2\circ T\bigg)\\
&+\frac{\partial}{\partial y}\bigg([(DT)^{-1}]_{12}\phi_1\circ T+[(DT)^{-1}]_{22}\phi_2\circ T\bigg)\\
&-\bigg(\phi_1\circ T\frac{\partial}{\partial x}[(DT)^{-1}]_{11}+\phi_2\circ T\frac{\partial}{\partial x}[(DT)^{-1}]_{21}\\
&+\phi_1\circ T\frac{\partial}{\partial y}[(DT)^{-1}]_{12}+\phi_2\circ T\frac{\partial}{\partial y}[(DT)^{-1}]_{22}\bigg).
\end{align*}
where
\beq
DT=\left(
   \begin{array}{cc}
     \frac{\partial T_1}{\partial x} & \frac{\partial T_2}{\partial x} \\
     \frac{\partial T_1}{\partial y} & \frac{\partial T_2}{\partial y} \\
   \end{array}
 \right)
\eeq
and
\beq\label{matrixinverse}
DT^{-1}=\frac{1}{\partial_{x}T_1\partial_{y}T_2
-\partial_{x}T_2\partial_{y}T_1}
\left(
   \begin{array}{cc}
     \frac{\partial T_2}{\partial y} & -\frac{\partial T_2}{\partial x} \\
     -\frac{\partial T_1}{\partial y} & \frac{\partial T_1}{\partial x} \\
   \end{array}
 \right)=
  \left(
     \begin{array}{cc}
       [(DT)^{-1}]_{11} & [(DT)^{-1}]_{12} \\
        \mbox{[}(DT)^{-1}]_{21} & [(DT)^{-1}]_{22}\\
     \end{array}
   \right).
\eeq

Given any $k$, $1\leq k\leq (p+1)(q+1)$,  for any $(x,y)\in[0,1]^2\setminus((\{a_1,...,a_p\}\times[0,1])\cup([0,1]\times\{b_1,...,b_q\}))$, set
\beqq
\Phi_{1,k}(x,y):=[(DT)^{-1}]_{11}\phi_1\circ T(x,y)\mathbbm{1}_{p_x(\overline{\Omega}_k)}(x)+[(DT)^{-1}]_{21}\phi_2\circ T(x,y)\mathbbm{1}_{p_y(\overline{\Omega}_k)}(y),
\eeqq
\beqq
\Phi_{2,k}(x,y):=[(DT)^{-1}]_{12}\phi_1\circ T(x,y)\mathbbm{1}_{p_x(\overline{\Omega}_k)}(x)+
[(DT)^{-1}]_{22}\phi_2\circ T(x,y)\mathbbm{1}_{p_y(\overline{\Omega}_k)}(y),
\eeqq
\beqq
\Phi^{-}_{1,k}(y):=\Phi_{1,k}(c_k(y),y),\ \Phi^{+}_{1,k}(y):=\Phi_{1,k}(d_k(y),y),
\eeqq
\beqq
 \Phi^{-}_{2,k}(x):=\Phi_{2,k}(x,a_k(x)),\ \Phi^{+}_{2,k}(x):=\Phi_{2,k}(x,b_k(x)).
\eeqq

It follows from \eqref{matrixinverse}, (A2), and (A3) that there exist $\delta>0$ and $\tau>1$ such that
\begin{align} \label{bound}
&\sup_{(x,y)\in[0,1]^2}\sup
\bigg\{\sum_{k:[x-\delta,x+\delta]\cap[c_k(y),d_k(y)]\neq\emptyset}\|\mathbbm{1}_{[c_k(y),d_k(y)]}(\cdot)[(DT)^{-1}](\cdot, y)\|_{\infty},\nonumber\\
&\sum_{k:[y-\delta,y+\delta]\cap[a_k(x),b_k(x)]\neq\emptyset}\|\mathbbm{1}_{[a_k(x),b_k(x)]}(\cdot)[(DT)^{-1}](x,\cdot)\|_{\infty}\bigg\}\leq \tau^{-1}.
\end{align}
Denote
\beqq
\eta_{1,k,y}(v):=\left\{
  \begin{array}{ll}
    0, & \hbox{if}\ v\in(-\infty,c_k(y)-\delta) \\
    \Phi^{-}_{1,k}(y)(v-c_k(y)+\delta)\delta^{-1}, & \hbox{if}\ v\in[c_k(y)-\delta,c_k(y)) \\
    0, & \hbox{if}\ v\in[c_k(y),d_k(y)] \\
   \Phi^{-}_{1,k}(y)(d_k(y)+\delta-v)\delta^{-1}, & \hbox{if}\ v\in(d_k(y),d_k(y)+\delta) \\
    0, & \hbox{if}\ v\in[d_k(y)+\delta,+\infty),
  \end{array}
\right.
\eeqq
\beqq
\eta_{2,k,x}(v):=\left\{
  \begin{array}{ll}
    0, & \hbox{if}\ v\in(-\infty,a_k(x)-\delta) \\
    \Phi^{-}_{2,k}(x)(v-a_k(x)+\delta)\delta^{-1}, & \hbox{if}\ v\in[a_k(x)-\delta,a_k(x)) \\
    0, & \hbox{if}\ v\in[a_k(x),b_k(x)] \\
   \Phi^{+}_{2,k}(x)(b_k(x)+\delta-v)\delta^{-1}, & \hbox{if}\ v\in(b_k(x),b_k(x)+\delta) \\
    0, & \hbox{if}\ v\in[b_k(x)+\delta,+\infty).
  \end{array}
\right.
\eeqq

Set
\beqq
\bar{\Phi}_{1,k}(x,y):=\Phi_{1,k}(x,y)+\eta_{1,k,y}(x),\ \bar{\Phi}_{2,k}(x,y):=\Phi_{2,k}(x,y)+\eta_{2,k,x}(y),
\eeqq
\beqq
\Theta_1(x,y):=\sum_k\bar{\Phi}_{1,k}(x,y),\ \Theta_2(x,y):=\sum_k\bar{\Phi}_{2,k}(x,y).
\eeqq
By the construction above, $\bar{\Phi}_{1,k}$ and $\bar{\Phi}_{2,k}$ are continuous functions, and for $(x,y)\in[0,1]^2\setminus((\{a_1,...,a_p\}\times[0,1])\cup([0,1]\times\{b_1,...,b_q\}))$,  by \eqref{bound} and $\|\phi\|_{\infty}\leq1$, one has
\begin{align*}
&\max_{j=1,2}|\Theta_j(x,y)|\leq\sup\bigg\{\sum_k\bigg
(\sup_{x\in[c_k(y),d_k(y)]}|\Phi_{1,k}(x,y)|\bigg)\mathbbm{1}_{[c_k(y),d_k(y)]}(x),\\
&\sum_k\bigg
(\sup_{y\in[a_k(x),b_k(x)]}|\Phi_{2,k}(x,y)|\bigg)\mathbbm{1}_{[a_k(x),b_k(x)]}(y)\bigg\}\leq\tau^{-1}.
\end{align*}

Further, set
\beqq
\Theta_1(x,y):=
\sum_k\int^{x}_0\partial_v\bar{\Phi}_{1,k}(v,y)dv=\int^{x}_0\sum_k\partial_v\bar{\Phi}_{1,k}(v,y)dv,
\eeqq
\beqq
\Theta_2(x,y):=
\sum_k\int^{y}_0\partial_v\bar{\Phi}_{2,k}(x,v)dv=\int^{y}_0\sum_k\partial_v\bar{\Phi}_{2,k}(x,v)dv.
\eeqq
Hence, one has that
\begin{align*}
&\mbox{div}\Theta(x,y)=
\partial_x\Theta_1+\partial_y\Theta_2\\
=&\sum_k\bigg(\partial_x\{[(DT)^{-1}]_{11}\phi_1\circ T+[(DT)^{-1}]_{21}\phi_2\circ T\}\mathbbm{1}_{[c_k(y),d_k(y)]}\\
&+\partial_y\{[(DT)^{-1}]_{12}\phi_1\circ T+[(DT)^{-1}]_{22}\phi_2\circ T\}\mathbbm{1}_{[a_k(x),b_k(x)]}\bigg)\\
&+\delta^{-1}\sum_k\bigg(\Phi^{-}_{1,k}(y)\mathbbm{1}_{[a_k(x),b_k(x)]}-\Phi^{+}_{1,k}(y)\mathbbm{1}_{[a_k(x),b_k(x)]}\\
&+\Phi^{-}_{2,k}(x)\mathbbm{1}_{[c_k(y),d_k(y)]}-\Phi^{+}_{2,k}(x)\mathbbm{1}_{[c_k(y),d_k(y)]}\bigg).
\end{align*}
Thus, by Lemma \ref{bvinequ}, one has that
\begin{align*}
&\sum_j\int^1_0\int^1_0\rho_{i_1\cdots i_kj}\cdot\mbox{div}\phi dxdy\leq \int^1_0\int^1_0\rho_{i_1\cdots i_k}\cdot\mbox{div}\Theta dxdy+\frac{2\|\rho_{i_1\cdots i_k}\|_1}{\tau\delta}\\
&+\bigg|\int^1_0\int^1_0\rho_{i_1\cdots i_k}\sum_k \bigg(\phi_1\circ T\frac{\partial}{\partial x}[(DT)^{-1}]_{11}+\phi_2\circ T\frac{\partial}{\partial x}[(DT)^{-1}]_{21}\\
&+\phi_1\circ T\frac{\partial}{\partial y}[(DT)^{-1}]_{12}+\phi_2\circ T\frac{\partial}{\partial y}[(DT)^{-1}]_{22}\bigg)dxdy\bigg|\\
\leq& \frac{\|\rho_{i_1\cdots i_k}\|_{BV([0,1]^2)}}{\tau}+\frac{2\|\rho_{i_1\cdots i_k}\|_1}{\tau\delta}+C_0\|\rho_{i_1\cdots i_k}\|_1\\
\leq&\frac{\|\rho_{i_1\cdots i_k}\|_{BV([0,1]^2)}}{\tau}+C_1\|\rho_{i_1\cdots i_k}\|_1.
\end{align*}
Set $\beta_k:=\sum_{i_1\cdots i_k}\|\rho_{i_1\cdots i_k}\|_{BV([0,1]^2)}$. One has that
\beqq
\beta_{k+1}=\sum_{i_1\cdots i_k}\bigg(\sum_j\|\rho_{i_1\cdots i_kj}\|_{BV}\bigg)\leq C_1+\frac{1}{\tau}\be_k.
\eeqq
Hence, for any $k$,
\beqq
\|\hat{\rho}_k\|_{BV([0,1]^2)}\leq \be_k\leq C_1\sum^{\infty}_{i=0}\bigg(\frac{1}{\tau}\bigg)^i:=M<\infty.
\eeqq

Thus, one has that
\beqq
\bigg\|\frac{1}{n}\sum^n_{k=1}\hat{\rho}_k\bigg\|_{BV}\leq M.
\eeqq
Hence, it follows from Lemma \ref{bvinequ} that the sequence $\{n^{-1}\sum^n_{k=1}\hat{\rho}_k\}_{n\in\mathbb{N}}$ is precompact in $L^1([0,1]^2,m)$. There exists a convergent subsequence, denoted by $\rho$, the corresponding measure is convergent in the weak star topology, which is a Borel probability measure $\mu$.

This completes the proof.
\end{proof}

By Lemma \ref{densitybv}, one has
$$
\sum^{\infty}_{k=0}\mu(f^kD(S,\delta \lambda^{-k}))=\sum^{\infty}_{k=0}\mu(D(S,\delta \lambda^{-k}))\leq 4(p+1)(q+1)\delta M\sum^{\infty}_{k=0}\lambda^{-k}<\infty,
$$
where $D(S,\delta)$ is the $\delta$-neighborhood of the  singular set $S$, $M$ is specified in the proof of Lemma \ref{densitybv}.
It follows from the Borel-Cantelli Lemma that $w$ is in $f^kD(S,\delta\lambda^{-k})$ for at most finitely many $k$, $\mu$-a.e., that is, for $\mu$ almost everywhere $w$, there is $\delta(w)>0$ such that $f^{-k}w\not\in D(S,\delta(w)\lambda^{-k})$ for all $k>0$, implying the existence of local unstable manifold $W^u_{\delta(w)}(w)$ by \cite{KatokStre}.

Now, it is to show Theorem \ref{mainresult}.

Pick some $w$ such that $\mbox{graph}(\alpha)=W^u_{\delta(w)}(w)$ exists. Fix this $W^u_{\delta(w)}(w)$ as a smooth surface $\alpha$. For $\delta>0$, set $\Lambda_{\delta}:=\{w\in R:\ d(f^{-k}w,S)\geq\delta\lambda^{-k},\ \forall k\geq0\}$ and $\Lambda_0:=\lim_{\delta\to0}\mu(\Lambda_{\delta})$ .

Next, it is to define a sequence of measurable partitions $\mathcal{P}_1\subset\mathcal{P}_2
\subset\mathcal{P}_3\subset\cdots$. For any $n\in\mathbb{N}$, let $\{U_{i,j}\}_{1\leq i,j\leq 2^n}$ be a partition of $R$, where $U_{i,j}=\{(x,y,z):\ \frac{i-1}{2^n}\leq x\leq\frac{i}{2^n}, \frac{j-1}{2^n}\leq y\leq\frac{j}{2^n},0\leq z\leq 1\}$. For $w\in U_{i,j}\cap \Lambda_{1/2^{n}}$, set $c(w):=W^u_{1/2^{n}}(w)\cap U_{i,j}$, $V_n:=\cup_{w\in\Lambda_{1/2^{n}}}c(w)$, and $\mathcal{P}_n:=\{c(w):\ w\in V_n\}\cup\{R-V_n\}$.

Fix a partition $\mathcal{P}_n$, it is to define a sequence of measures $\{\tilde{\mu}_k\}_{k\in\mathbb{N}}$ as follows: since $\mu_k=f^k_*\mu_0$ is defined on $f^k(\mbox{graph}(\alpha))$ and $f^k(\mbox{graph}(\alpha))$ is a finite union of smooth surfaces, let $\tilde{\mu}_k$ be $\mu_k$ annihilated on those parts of its support that only partially cross some $\tilde{U}_{i,j}$, that is, the support of $\tilde{\mu}_k$ consists of all of the sets $V_0$ satisfying that there is a smooth component of $f^k(\mbox{graph}(\alpha))$, denoted by $W_0$, such that $V_0\subset W_0$ and $(p_x\times p_y)V_0=\cup_{i_0,j_0}(p_x\times p_y)(U_{i_0,j_0})$.

Next, it is to show that given any $\epsilon>0$, there is $n=n(\epsilon)>0$ such that for the fixed partition $\mathcal{P}_n$ and sufficiently large $k$, $\tilde{\mu}_k(R)>1-\epsilon$. For $w\in(\mbox{support}(\mu_k)
-\mbox{support}(\tilde{\mu}_k))$, $w$ is either in a small piece, which only partially crosses some $\tilde{U}_{i,j}$, or the distance between $w$ and a cusp in $f^k(\mbox{graph}(\alpha))$ is less than $1/2^{n(\epsilon)}$. Hence, one has
\begin{align*}
&1-\tilde{\mu}_k(R)=\mu_k(\mbox{support}(\mu_k)
-\mbox{support}(\tilde{\mu}_k))\\
=&\sum^k_{i=1}\mu_i\{w:\ d(f^{-i}w,S)\leq2^{-n(\epsilon)}\lambda^{-i}\}+\mu_k(\{\mbox{the boundary of }f^k(\mbox{graph}(\alpha))\})\\
=&\sum^k_{i=1}\mu_i\{w:\ d(fw,S)\leq2^{-n(\epsilon)}\lambda^{-i}\}+\mu_k(\{\mbox{the boundary of }f^k(\mbox{graph}(\alpha))\})\\
\leq& 4M(p+1)(q+1)2^{-n(\epsilon)}\sum^k_{i=1}\lambda^{-i}+\mu_k(\{\mbox{the boundary of }f^k(\mbox{graph}(\alpha))\}),
\end{align*}
where $M$ is specified in the proof of Lemma \ref{densitybv}.
The first term is very close to zero as $n(\epsilon)$ goes to positive infinite, the second term goes to zero as $k$ goes to positive infinite. For the given $\epsilon$, take a sufficiently large $n(\epsilon)$, the corresponding partition $\mathcal{P}_{n(\epsilon)}$ is denoted by $\tilde{U}_{i,j}$, $1\leq i,j\leq 2^{n(\epsilon)}$.

Since $\lim_{i\to\infty}\frac{1}{n_i}\sum^{n_i}_{k=1}\mu_k\to\mu$ in the weak topology, there exists a subsequence $\{n_{i_j}\}$ of $\{n_i\}$ such that $\lim_{j\to\infty}\frac{1}{n_{i_j}}\sum^{n_{i_j}}_{k=1}
\tilde{\mu}_k\to\tilde{\mu}$ in the weak topology. It follows from the definitions of $\tilde{\mu}_k$ and $\mu_k$ that one has that $\tilde{\mu}\ll\mu$ and $0\leq d\tilde{\mu}/d\mu\leq1$. So, by taking $n(\epsilon)$ large enough, one has that $\tilde{\mu}$ is equivalent to $\mu$ except on a set with the $\mu$-measure less than $\epsilon$.

It is to show that there is a transverse measure $\tilde{\mu}_T$ for the measure $\tilde{\mu}$ such that
for $\tilde{\mu}_T$-a.e. $c\in\mathcal{P}_{n(\epsilon)}$, one has that $\tilde{\mu}_c\ll m_c$.
Denote $\tilde{g}_{i_1\cdots i_k}$ as the density of $(p_x\times p_y)_*(\tilde{\mu}_k|L_{i_1\cdots i_k})$. For $1\leq i,j\leq 2^{n(\epsilon)}$, $\tilde{U}_{i,j}$, any $i_1,...,i_k$, it is to show that either
\beqq
\tilde{g}_{i_1\cdots i_k}=0\ \mbox{on}\ (p_x\times p_y)\tilde{U}_{i,j},
\eeqq
or
\beqq
\frac{\tilde{g}_{i_1\cdots i_k}(w_1)}{\tilde{g}_{i_1\cdots i_k}(w_2)}\leq M_1,\ \forall w_1,w_2\in(p_x\times p_y)\tilde{U}_{i,j},
\eeqq
where $M_1$ is a positive constant independent on the choice of $i_1,...,i_k$. If $L_{i_1,,,i_k}$ does not cross the full $\tilde{U}_{ij}$, then $\tilde{g}_{i_1\cdots i_k}=0$ on $(p_x\times p_y)\tilde{U}_{i,j}$. Suppose that there are subsets $E_0,E_1,...,E_k\subset [0,1]\times[0,1]$ with $E_k\subset(p_x\times p_y)\tilde{U}_{i,j}$, and diffeomorphic map $h_l:E_{l-1}\to E_l$ such that $\tilde{g}_{i_1\cdots i_k}|(p_x\times p_y)\tilde{U}_{i,j}=\mbox{det}((h_k\circ\cdots h_1)^{-1})$.

Suppose $w_1=(x_1,y_1)$ and $w_2=(x_2,y_2)$. It follows from the assumption that $f|(R-S)$ has bounded first and second derivatives that
\begin{align*}
&|\mbox{det}DT(w_1)-\mbox{det}DT(w_2)|\\
=&\bigg|\mbox{det}\left(\begin{array}{cc}
\partial_xT_1(w_1) & \partial_yT_1(w_1)\\
\partial_xT_2(w_1) & \partial_yT_2(w_1)
\end{array}
\right)-\mbox{det}\left(\begin{array}{cc}
\partial_xT_1(w_2) & \partial_yT_1(w_2)\\
\partial_xT_2(w_2) & \partial_yT_2(w_2)
\end{array}
\right)\bigg|\\
=& |(\partial_xT_1(w_1) \partial_yT_2(w_1)-
\partial_yT_1(w_1)\partial_xT_2(w_1))\\
&-
(\partial_xT_1(w_2) \partial_yT_2(w_2)-
\partial_yT_1(w_2)\partial_xT_2(w_2))|\\
\leq&|\partial_xT_1(w_1)(\partial_yT_2(w_1)-
\partial_yT_2(w_2)|+|\partial_yT_2(w_2)(
\partial_xT_1(w_1)-\partial_xT_1(w_2))|\\
&+|\partial_yT_1(w_1)(\partial_xT_2(w_1)-\partial_xT_2(w_2))|+|\partial_xT_2(w_2)(
\partial_yT_1(w_1)-\partial_yT_1(w_2))|\\
\leq&C_1|w_1-w_2|,
\end{align*}
where $C_1$ is a positive constant. Further, since $\bigg|\frac{\partial T_1}{\partial x}\bigg|-\bigg|\frac{\partial T_{1}}{\partial y}\bigg|\geq \lambda$,
 $\bigg|\frac{\partial T_{2}}{\partial y}\bigg|-\bigg|\frac{\partial T_{2}}{\partial x}\bigg|\geq \lambda$, (A0), and
\eqref{matrixinverse}, it follows from direct calculation that
\beqq
|\log \mbox{det}DT^{-1}(w_1)-\log\mbox{det}DT^{-1}(w_2)|\leq C_2|w_1-w_2|,
\eeqq
where $C_2$ is a positive constant. The last inequality implies that
\beqq
\frac{\tilde{g}_{i_1\cdots i_k}(w_1)}{\tilde{g}_{i_1\cdots i_k}(w_2)}\leq M_1,\ \forall w_1,w_2\in(p_x\times p_y)\tilde{U}_{i,j},
\eeqq
where $M_1$ is independent on $i_1,...,i_k$.

This completes the whole proof.

\medskip

\begin{remark}
It is easy to obtain similar results for maps with several directions of instability in high-dimensional spaces.
\end{remark}

\begin{remark}
Some statistical properties of these maps could be obtained by using the transfer operator methods with some proper function spaces.
\end{remark}
\bigskip

\section{Example}

In this section, we give an example to illustrate the results in Theorem \ref{mainresult} by computer simulations. The software Mathematica draws the pictures of the simulations.

\begin{example}
Consider the map $f=(f_1(x,y,z),f_2(x,y,z),f_3(x,y,z)):\mathbb{R}^3\to\mathbb{R}^3$, where
\beqq
f_1(x,y,z)=\left\{
\begin{array}{ll}
k_1x+k_2(y+z)& \hbox{if}\ x\in[0,1/3]\\
-k_1(x-1/3)+k_1/3+k_2(y+z)& \hbox{if}\ x\in[1/3,2/3]\\
k_1(x-2/3)+k_2(y+z)& \hbox{if}\ x\in[2/3,1]
\end{array};
\right.
\eeqq
\beqq
f_2(x,y,z)=\left\{
\begin{array}{ll}
k_1y+k_2(x+z)& \hbox{if}\ y\in[0,1/3]\\
-k_1(y-1/3)+k_1/3+k_2(x+z)& \hbox{if}\ y\in[1/3,2/3]\\
k_1(y-2/3)+k_2(x+z)& \hbox{if}\ y\in[2/3,1]
\end{array};
\right.
\eeqq
\beqq
f_3(x,y,z)=k_3x,
\eeqq
and satisfies the following assumptions:
$k_1>2+2k_2$, $2k_2+k_1/3\leq1$, $k_1\geq10k_2>0$, and $0<k_3<(k_1-2k_2)/8$. It is easy to verify that this map satisfies all the assumptions in Theorem \ref{mainresult} with $N=1$.
\end{example}

Fix $k_1=2.4$, $k_2=0.08,$ and $k_3=0.25$. Figures 1 and 2 are the simulation diagrams with different initial values. In Figure 1, the initial value is taken as $(0.2,0.1,0.5)$. In Figure 2, the initial value is taken as $(0.5,0.5,0.5)$.
In Figure 1, the chaotic dynamical behavior is observed. In Figure 2, a ``regular" orbit is observed. From these simulations, we guess that there might exist two ergodic components. It is an interesting problem to prove or disprove the existence of at least two ergodic components.

\bigskip

\section*{Acknowledgments}
I would like to thank Professor Sheldon Newhouse for his encouragement, comments, and providing many useful references.

I devote this work to my grandmother, she moved to heaven in 2014. She will always be in my heart.
\bigskip

\newpage

\centerline{\Large \bf List of Figure Captions}

\baselineskip=16pt
\vspace{0.4 in} \noindent
\begin{enumerate}
\item[Figure 1.] The chaotic attractor of map in Example 4.1 with $k_1=2.4$, $k_2=0.08,$ and $k_3=0.25$, where the initial value is taken as $(0.2,0.1,0.5)$.

\item[Figure 2.] The chaotic attractor of map in Example 4.1 with $k_1=2.4$, $k_2=0.08,$ and $k_3=0.25$, where the initial value is taken as $(0.5,0.5,0.5)$.

\end{enumerate}

\begin{figure}[H]
\begin{center}
\scalebox{0.3 }{ \includegraphics{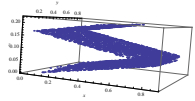}}
\renewcommand{\figure}{Fig.}
\caption{The chaotic attractor of map in Example 4.1 with $k_1=2.4$, $k_2=0.08,$ and $k_3=0.25$, where the initial value is taken as $(0.2,0.1,0.5)$.}\label{Figure1}
\end{center}
\end{figure}

\begin{figure}[H]
\begin{center}
\scalebox{0.3 }{ \includegraphics{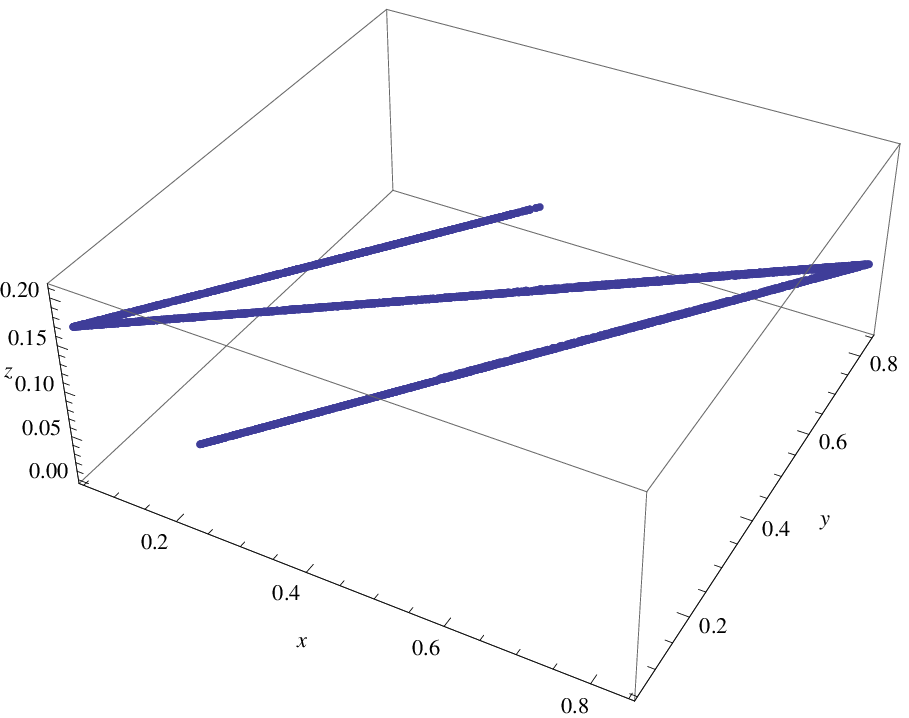}}
\renewcommand{\figure}{Fig.}
\caption{The chaotic attractor of map in Example 4.1 with $k_1=2.4$, $k_2=0.08,$ and $k_3=0.25$, where the initial value is taken as $(0.5,0.5,0.5)$.}\label{Figure2}
\end{center}
\end{figure}

\end{document}